\DeclareMathOperator{\Acc}{Acc}
\DeclareMathOperator{\Val}{val}
\DeclareMathOperator{\Weight}{Weight}
\DeclareMathOperator{\Mon}{Mon}
\DeclareMathOperator{\Free}{Free}
\DeclareMathOperator{\dom}{dom}
\DeclareMathOperator{\supp}{supp}
\newcommand{\zero}{\mathbb 0}
\newcommand{\one}{\mathbb 1}
\newcommand{\mV}{\mathcal V}
\newcommand{\MSO}{{\bf MSO}}
\newcommand{\li}{\langle \! \langle}
\newcommand{\re}{\rangle \! \rangle}
\newcommand{\Mult}{\mathbb N \langle M \rangle}
\newcommand{\valo}{\Val^{\omega}}
\newcommand{\sov}{\Sigma^{\omega}_{\mV}}
\newcommand{\so}{\Sigma^{\omega}}
\newcommand{\mwMA}{{\bf mwMA}}
\newcommand{\mwA}{{\bf mwA}}
\newcommand{\mwMSO}{{\bf mwMSO}}
\newcommand{\wMA}{{\bf wMA}}
\newcommand{\wA}{{\bf wA}}
\newcommand{\wMSO}{{\bf wMSO}}
\newcommand{\eN}{\overline{\mathbb N}}
\newcommand{\bli}{/\! \!/}
\newcommand{\bre}{/ \! \! /}
\newcommand{\cre}{ / \! \! /}
\begin{document}

\title{Multi-weighted Automata and MSO Logic\thanks{The final publication is available at link.springer.com; DOI: 10.1007/978-3-642-38536-0\_36}}
\author{Manfred Droste and Vitaly Perevoshchikov\thanks{Partially supported by  DFG Graduiertenkolleg 1763 (QuantLA)}}
\institute{Universit\"at Leipzig,  Institut f\"ur Informatik, \\ 
04109 Leipzig, Germany\\
\email{\{droste,perev\}@informatik.uni-leipzig.de}
}

\maketitle

\begin{abstract}
Weighted automata are non-deterministic automata where the transitions are equipped with weights. They can model quantitative aspects of systems like costs or energy consumption. The value of a run can be computed, for example, as the maximum, limit average, or discounted sum of transition weights. In multi-weighted automata, transitions carry several weights and can model, for example, the ratio between rewards and costs, or the efficiency of use of a primary resource under some upper bound constraint on a secondary resource. Here, we introduce a general model for multi-weighted automata as well as a multi-weighted MSO logic.
In our main results, we show that this multi-weighted MSO logic and multi-weighted auto\-mata are expressively equivalent both for finite and infinite words. The translation process is effective, leading to decidability results for our multi-weighted MSO logic.
\keywords{Multi-priced automata, quantitative logic, average behavior, power series}
\end{abstract}

\section{Introduction}

Recently, multi-priced timed automata \cite{BBL04, BBL08,  FLT11, LR05} have received much attention for real-time systems. These automata extend priced timed automata by featuring several price parameters. This permits to compute objectives like the optimal ratio between rewards and costs \cite{BBL04, BBL08}, or the optimal consumption of several resources where more than one resource must be restricted \cite{LR05}. Arising from the model of timed automata, the multi-weighted setting  has also attracted much notice for classical non-deterministic automata \cite{BJLLS12, BGH09,  FJLS11, FGR11}. 

The goal of the present paper is to develop a multi-weighted monadic second order (MSO) logic and to show that it  is expressively equivalent to multi-weighted automata. 

B\"uchi's and Elgot's fundamental theorems \cite{Buc60, Elg61} established the expressive equivalence of finite automata and MSO logic. Weighted MSO logic with weights taken from an arbitrary semiring was introduced in \cite{DG07, DG09} and it was shown that a fragment of this weighted logic and semiring-weighted automata on finite and infinite words have the same expressive power \cite{DG09}. Chatterjee, Doyen, and Henzinger \cite{CDH08, CDH10} investigated weighted automata modeling the average and long-time behavior of systems. The behavior of such automata cannot be described by semiring-weighted automata. In \cite{DM11, DM12}, valuation monoids were presented to model the quantitative behaviors of these automata. Their  logical characterization was given in \cite{DM12}. 
In this paper, we establish, both for finite and infinite words, the B\"uchi-type result for multi-weighted automata; these do not fit into the framework of other weighted automata like semiring automata \cite{BR88, DKV09, KS86, SS78}, or even valuation monoid automata \cite{DM11, DM12}.

First, we develop a general model for multi-weighted automata which incorporates several multi-weighted settings from the literature. Next, we define a multi-weighted MSO logic by extending the classical MSO logic with constants which could be tuples of weights. The semantics of formulas should be single weights (not tuples of weights). Different from weighted MSO logics over semirings or valuation monoids, this makes it impossible to define the semantics inductively on the structure of an MSO formula. Instead, for finite words, we introduce an intermediate semantics which maps each word to a finite multiset containing tuples of weights. The semantics of a formula is then defined by applying to the multiset semantics an operator which evaluates a multiset to a single value. Our B\"uchi-type result for multi-weighted automata on finite words is established by reducing it to the corresponding result of \cite{DM12} for the product valuation monoid of finite multisets.

In the case of infinite words, it is usually not possible to collect all the information about weights of paths in finite multisets. Therefore, we cannot directly reduce the desired result to the proof given in \cite{DM12} for infinite words. But we can use the result of \cite{DM12} to translate each multi-weighted formula of our logic into an automaton over the product $\omega$-valuation monoid of multisets, and we show that the weights of transitions in this automaton satisfy certain properties which allow us to translate it into a multi-weighted automaton.

All our automata constructions are effective. Thus, decision problems for multi-weighted logic can be reduced to decision problems of multi-weighted automata. Some of these problems for automata can be solved whereas for others the details still have to be explored.

\section{Multi-weighted Automata on Finite Words}

The model of {\it multi-weighted} (or {\it multi-priced}) automata is an extension of the model of weighted automata over semirings \cite{BR88, DKV09,KS86,SS78} and valuation monoids \cite{DM11, DM12} by featuring several price parameters. In the literature, different situations of the behaviors of multi-weighted automata  were considered (cf. \cite{BJLLS12, BGH09, BBL04, BBL08,  FJLS11, FLT11, FGR11, LR05}) to model the consumption of several resources. For instance, the model of multi-priced timed automata introduced in \cite{BBL04} permits to describe the optimal ratio between accumulated rewards and accumulated costs of transitions. In this section, we introduce a general model to describe the behaviors of multi-weighted automata on finite words. 

Consider an automaton in which every transition carries a reward and a cost. For paths of transitions, we are interested in the ratio between accumulated rewards and accumulated costs. The automaton should assign to each word the maximal reward-cost ratio of accepting paths on $w$. The idea is to model the weights by elements of the set ${M = \mathbb R \times \mathbb R_{\ge 0}}$. We use a valuation function $\Val: M^+ \to M$ to associate to each sequence of such weights a single weight in $M$. Since our automata are nondeterministic and a word may have several accepting paths, we obtain a multiset of weights of these paths, hence a multiset of elements from $M$. We use an evaluator function $\Phi$ which associates to each multiset of $M$ a single value. The mapping $\Phi$ can be considered as a general summation operator. Now we turn to formal definitions.

To cover also the later case of infinite words, we let $\overline{\mathbb N} = \mathbb N \cup \{\infty\}$. Let $M$ be a set. A {\it multiset} over $M$ is a mapping $r: M \to \overline{\mathbb N}$. For each $m \in M$, $r(m)$ is the number of copies of $m$ in $r$. We let $\supp(r) = \{m \in M \; | \; r(m) \neq 0\}$, the {\it support} of $r$. We say that a multiset $r$ is {\it finite} if $\supp(r)$ is finite and $\infty \notin r(M)$. We denote the collection of all multisets by $\eN \li M \re$ and the collection of all finite multisets by $\Mult$.
\begin{definition}
\label{Def:VS}
Let $K$ be a set. A {\rm $K$-valuation structure} $(M, \Val, \Phi)$ consists of a set $M$, a {\rm valuation function} $\Val: M^+ \to M$ with $\Val(m) = m$ for all $m \in M$, and an {\rm evaluator function} ${\Phi: \Mult \to K}$.
\end{definition}
A {\em nondeterministic automaton} over an alphabet $\Sigma$ is a tuple ${\mathcal A = (Q, I, T, F)}$ where $Q$ is a set of {\it states}, $I, F \subseteq Q$  are sets of {\it initial} resp. {\it final states} and ${T \subseteq Q \times \Sigma \times Q}$ is a {\it transition relation}. {\it Finite paths} $\pi = (t_i)_{0 \le i \le n}$ of $\mathcal A$ are defined as usual as finite sequences of matching transitions, say $t_i = (q_{i}, a_{i}, q_{i+1})$. Then we call the word ${w =  a_0 a_1 ... a_n \in \Sigma^+}$ the {\it label} of the path $\pi$ and $\pi$ a path on $w$. A path is {\it accepting} if it starts in $I$ and ends in $F$. We denote the set of all accepting paths of $\mathcal A$ on $w \in \Sigma^+$ by $\Acc_{\mathcal A}(w)$. 

\begin{definition}
Let $\Sigma$ be an alphabet, $K$ a set and $\mathcal M = (M, \Val, \Phi)$ a $K$-valuation structure. A {\rm multi-weighted automaton} over $\Sigma$ and $\mathcal M$ is a tuple $(Q, I, T, F, \gamma)$ where $(Q, I, T, F)$ is a nondeterministic automaton and $\gamma: T \to M$.
\end{definition}
Let $\mathcal A$ be a multi-weighted automaton over $\Sigma$ and $\mathcal M$, $w \in \Sigma^+$ and $\pi = t_0 ... t_n$ a path on $w$. The {\it weight} of $\pi$ is defined by 
$\Weight_{\mathcal A}(w) = \Val(\gamma(t_i))_{0 \le i \le n}$. Let ${| \mathcal A | (w) \in \Mult}$ be the finite multiset containing the weights of all accepting paths in $\Acc_{\mathcal A}(w)$. Formally, $| \mathcal A | (w)(m) = |\{\pi \in \Acc_{\mathcal A}(w) \; | \; \Weight_{\mathcal A}(\pi) = m\}|$ for all $m \in M$.
The {\it behavior} $||\mathcal A||: \Sigma^+ \to K$ of $\mathcal A$ is defined for all $w \in \Sigma^+$ by $||\mathcal A||(w) = \Phi(|\mathcal A|(w))$.

Note that every weighted automaton over a valuation monoid $(M, +, \Val, \mathbb 0)$ (cf. \cite{DM11, DM12}) can be considered as a multi-weighted automaton over the $K$-valuation structure $(M, \Val, \Phi)$ with $K = M$ and $\Phi: \Mult \to M$ defined by ${\Phi(r) = \sum (m \; | \; m \in \supp(r) \text{ and } 1 \le i \le r(m) )}$ (as usual, $\sum \emptyset = \mathbb 0$). Moreover,  multi-weighted automata extend the model of weighted automata over valuation monoids  in two directions. First, whereas the weights of transitions in multi-weighted automata are taken from $M$, the behavior is a mapping with the codomain $K$ where $K$ and $M$ do not necessarily coincide. Second, we resolve the nondeterminism in multi-weighted automata using an evaluator function $\Phi$ defined on finite multisets. 

Next, we consider several examples how to describe the behavior of multi-weighted automata known from the literature using valuation structures. In each of the three examples below, let $\Sigma$ be an alphabet,  $\mathcal M = (M, \Val, \Phi)$ a $K$-valuation structure, and $\mathcal A$ a multi-weighted automaton over $\Sigma$ and $\mathcal M$.

\begin{example}
\label{Example:F1}
Let $\overline{\mathbb R} = \mathbb R \cup \{-\infty, \infty\}$. 
Let $M = \mathbb R \times \mathbb R_{\ge 0}$, $K = \overline{\mathbb R}$, $\Val((x_1, y_1), ..., (x_k, y_k)) = \left(\sum_{i = 1}^k x_i, \sum_{i = 1}^k y_i \right)$ be the componentwise sum, and $\Phi$ defined by ${\Phi(r) = \max\limits_{(x, y) \in \supp(r)} \frac{x}{y}}$ \,where we put $\frac{x}{0} = \infty$ and $\max (\emptyset) = -\infty$. For instance, for every transition weight $(x, y) \in M$, $x$ might mean the reward and $y$ the cost of the transition. Then $||\mathcal A||(w)$ is the maximal ratio between accumulated rewards and costs of accepting paths on $w$. The ratio setting was considered first for multi-priced timed automata \cite{BBL04, BBL08} and also for nondeterministic automata \cite{BGH09,FGR11}.
\end{example}

\begin{example}
\label{Example:F2}
Let $M = \mathbb R \times \mathbb R$, $K = \mathbb R \cup \{\infty\}$ and $p \in \mathbb R$. Let $\Val$ be as in the previous example and
$\Phi(r) = \min \{x \; | \; (x, y) \in \supp(r) \text{ and } y \le p \}$, for $r \in \Mult$, with $\min(\emptyset) = \infty$. Let $t$ be a transition and $\gamma(t) = (x, y)$. We call $x$ the primary and $y$ the secondary cost. Then $||\mathcal A||(w)$ is the cheapest primary cost of reaching with $w$ some final state  under the given upper bound constraint $p \in \mathbb R$ on the secondary cost. The optimal conditional reachability problem for multi-priced timed automata was studied in \cite{LR05}.
\end{example}

\begin{comment}
\begin{example}
\label{Example:F3}
Let $M = \mathbb R \times \mathbb R_{\ge 0}$, $K = \mathbb R \cup \{\infty\}$ where, for every $(x, t) \in M$, we regard $x$ as a cost and $t$ as a duration. Let $0 < \lambda < 1$ be a discounting factor. We define $\Val$ by 
$\Val((x_1, t_1), ..., (x_k, t_k)) = \left( \sum_{i = 1}^k \lambda^{\tau_{i-1}} \cdot x_i, \tau_k \right)$ where $\tau_0 = 0$, $\tau_{i + 1} = \tau_i + t_{i+1}$ for $i \ge 0$. Let also $\Phi$ be defined by $\Phi(r) = \min\limits_{(x, t) \in \supp(r)} x$. Then $||\mathcal A||(w)$ is the time minimal discounted cost of accepting paths on $w$. Discounting in time was considered in \cite{FL09}.
\end{example}
\end{comment}

\begin{example}
\label{Example:F4}
Let $M = \mathbb R^n$ for some $n \ge 1$, $K = \mathbb R$, and $\Val$ be the component\-wise sum of vectors. We define $\Phi: \Mult \to \mathbb R$ as follows. Let $r \in \Mult$ and $S = \supp(r)$. Then $\Phi(r) = 0$ if $S = \emptyset$ and $\Phi(r) = \frac{\sum_{v \in S}  r(v) \cdot ||v|| }{\sum_{v \in S} r(v)}$ otherwise. Here, for $v = (v_1, ..., v_n)$, ${||v|| = \sqrt{v_1^2 + ... + v_n^2}}$ is the length of $v$. Suppose that $\mathcal A$ controls the movement of some object in $\mathbb R^n$ and each transition $t$ carries the coordinates of the displacement vector of this object. Then, $||\mathcal A||(w)$ is the value of the average displacement of the object after executing $w$.
\end{example}

\section{Multi-weighted MSO Logic on Finite Words}

In this section, we wish to develop a multi-weighted MSO logic where the weight constants are elements of a set $M$. Again, if weight constants are {\it pairs} of a reward and a cost, the semantics of  formulas must reflect the maximal reward-cost ratio setting, so the weights of formulas should be {\it single weights}. Then, there arises a problem to define the semantics function inductively on the structure of a formula as in \cite{DG09, DM12}. We solve this problem in the following way. We associate to each word a multiset of elements of $M$. Here, for disjunction and existential quantification, we use the multiset union. For conjunction, we extend a product operation given on the set $M$ to the Cauchy product of multisets. Similarly, for universal quantification, we extend the valuation function on $M^+$ to $\Mult^+$. 
Then, we use an evaluator function $\Phi$ which associates to each multiset of elements a single value (e.g. the maximal reward-cost ratio of pairs contained in a multiset).

As in the case of weighted MSO logics over product valuation monoids \cite{DM12}, we extend a valuation structure (cf. Definition \ref{Def:VS}) with a unit element and a binary operation in order to define the semantics of atomic formulas and of the conjunction.

\begin{definition}
Let $K$ be a set. A {\rm product $K$-valuation structure ($K$-pv-structure)} $(M, \Val, \diamond, \one, \Phi)$ consists of a $K$-valuation structure $(M, \Val, \Phi)$, a constant $\one \in M$ with $\Val(m \one ... \one) = m$ for $m \in M$, and a multiplication $\diamond: M \times M \to M$ such that $m \diamond \one = \one \diamond m = m$ for all $m \in M$. 
\end{definition}
For the rest of this section, we fix an alphabet $\Sigma$ and a $K$-pv-structure ${\mathcal M = (M, \Val, \diamond, \one, \Phi)}$. Let $V$ be a countable set of first and second order variables. Lower-case letters like $x, y$ denote first order variables whereas capital letters like $X, Y$ etc. denote second order variables. 
The syntax of {\it multi-weighted MSO logic} over $\Sigma$ and $\mathcal M$ is defined as in \cite{BG09} by the grammar:
\begin{align*}
\beta &::= P_a(x) \; | \; x \le y  \; | \; x \in X \; | \; \lnot \beta \; | \; \beta \wedge \beta  \; | \; \forall x \beta \; | \; \forall X \beta \\
\varphi  &::= m \; | \; \beta \; | \; \varphi \vee \varphi \; | \; \varphi \wedge \varphi \; | \; \exists x \varphi \; | \; \forall x \varphi \; | \; \exists X \varphi
\end{align*}
where $a \in \Sigma$, $m \in M$, $x, y, X \in V$. The formulas $\beta$ are called {\it boolean} formulas and the formulas $\varphi$ {\it multi-weighted MSO}-formulas. Note that negation and universal second order quantification are allowed in boolean formulas only. Note also that the boolean formulas have the same expressive power as (unweighted) MSO logic.

The class of {\it almost boolean} formulas over $\Sigma$ and $M$ is the smallest class containing all constants $m \in M$ and all boolean formulas and which is closed under $\wedge$ and $\vee$. A multi-weighted MSO formula $\varphi$ is {\it syntactically restricted} if
whenever it contains a sub-formula $\forall x \psi$, then $\psi$ is almost boolean, and if for every subformula $\varphi_1 \wedge \varphi_2$ of $\varphi$ either both $\varphi_1$ and $\varphi_2$ are almost boolean, or $\varphi_1$ or $\varphi_2$ is boolean.

The set $\Free(\varphi)$ of free variables in $\varphi$ is defined as usual. For $w \in \Sigma^+$,  let ${\dom(w) = \{0, ..., |w|\text{-}1\}}$. Let $\mV$ be a finite set of variables with $\Free(\varphi) \subseteq \mV$. A $(\mathcal V, w)$-{\it assignment} is a mapping $\sigma: \mathcal V \to \dom(w) \cup 2^{\dom(w)}$ where every first order variable is mapped to an element of $\dom(w)$ and every second order variable to a subset of $\dom(w)$. The update $\sigma[x / i]$ for $i \in \dom(w)$ is defined as:
$\sigma[x / i](x) = i$ and $\sigma[x / i] |_{\mathcal V \setminus \{x\}} = \sigma |_{\mathcal V \setminus \{x\}}$. The update for second order variables can be defined similarly. Each pair $(w, \sigma)$ of a word and $(\mathcal V, w)$-assignment can be encoded as a word over the extended alphabet $\Sigma_\mathcal V = \Sigma \times \{0,1\}^\mathcal V$. Note that a word $(w, \sigma) \in \Sigma_\mathcal V^+$ represents an assignment if and only if, for every first order variable in $\mathcal V$, the corresponding row in the extended word contains exactly one 1; then $(w, \sigma)$ is called {\it valid}. The set of all valid words in $\Sigma_\mathcal V ^+$ is denoted by $\mathcal N_\mathcal V$. We also denote by $\Sigma_{\varphi}$ the alphabet $\Sigma_{\Free(\varphi)}$.

Consider again the collection $\mathbb N \langle M \rangle$ of all finite multisets over $M$. 
Here, we consider the set of natural numbers as the semiring $(\mathbb N, +, \cdot, 0, 1)$ where $+$ and $\cdot$ are usual addition and multiplication.
The {\it union} $(r_1 \oplus r_2) \in \mathbb N \langle M \rangle$ of finite multisets $r_1, r_2 \in \mathbb N \langle M \rangle$ is defined by $(r_1 \oplus r_2)(m) = r_1(m) + r_2(m)$
for all $m \in M$. We define the {\it Cauchy product} $(r_1 \cdot r_2) \in \mathbb N \langle M \rangle$ of two finite multisets $r_1, r_2 \in \mathbb N \langle M \rangle$ by
$$(r_1 \cdot r_2)(m) = \sum \left( r_1(m_1) \cdot r_2(m_2) \; | \; m_1, m_2 \in M, m_1 \diamond m_2 = m \right).$$ Note that in the equation above there are finitely many non-zero summands, because the  multisets $r_1$ and $r_2$ are finite.
Let $n \ge 1$ and $r_1, ..., r_n \in \mathbb N \langle M \rangle$. We also define the {\it valuation} $\Val(r_1, ..., r_n) \in \mathbb N \langle M \rangle$ by
$$\Val(r_1, ..., r_n)(m) =  \sum \left( \prod\nolimits_{i = 1}^n r_i (m_i) \; | \;  m_1, ..., m_n \in M, \Val(m_1, ...,  m_n) = m \right).$$ Note that the right side of the equation above also contains only finitely many non-zero summands. 
 The {\it empty multiset} $\varepsilon$ is the finite multiset whose support is empty. A {\it simple multiset} over $M$ is a finite multiset $r \in \mathbb N \langle M \rangle$ such that $\supp(r) = \{m_r\}$ and $r(m_r) = 1$, so $r(m) = 0$ for all $m \neq m_r$. We denote such a simple multiset $r$ by $[m_r]$.
The collection of all simple multisets over $M$ is denoted by $\Mon(M)$.

As opposed to the case of pv-monoids \cite{DM12}, the pv-structure $\mathcal M$ does not contain a commutative and associative sum operation to define the semantics of the disjunction and the existential quantification. For this, we employ the sum of multisets. Let $\varphi$ be a multi-weighted formula over $\Sigma$ and $\mathcal M$, and ${\mV \supseteq \Free(\varphi)}$.  We define the auxiliary multiset semantics function ${\langle \varphi \rangle_{\mV}: \Sigma_{\mV}^+ \to \Mult}$ relying on the ideas of \cite{DG09} (cf. also \cite{DM12}) as follows: for all $(w, \sigma) \notin \mathcal N_{\mV}$, $\langle \varphi \rangle_{\mV} (w, \sigma) = \varepsilon$ and, for all ${(w, \sigma) \in \mathcal N_{\mV}}$, $\langle \varphi \rangle_{\mV}(w, \sigma)$ is defined inductively as shown in Table \ref{Table:Semantics}. 
\begin{table}[t]
{\scriptsize
{\begin{equation}
\begin{aligned}
\langle m \rangle _{\mathcal V}(w, \! \sigma) \! &= \! [m] \\
\langle P_a(x) \rangle _{\mathcal V}(w, \!\sigma) \! &= \! \begin{cases} \! [\one], \! \! \! \! & \mbox{if } w_{\sigma(x)} \! = \! a,\\  \, \! \varepsilon, & {\rm otherwise} \end{cases} \\
\langle x \! \le \! y \rangle_{\mathcal V}(w,\! \sigma) \! &= \! \begin{cases} \! [\one], \! \! \! \! & \mbox{if } \sigma(x) \le \sigma(y), \\ \varepsilon, & {\rm otherwise} \end{cases} \\
\langle  x \! \in \! X \rangle_{\mathcal V}(w,\! \sigma) \! &= \! \begin{cases} \! [\one], \! \! \! \! & \mbox{if } \sigma(x) \in \sigma(X), \\ \varepsilon, & {\rm otherwise} \end{cases} \\
\langle \lnot \beta \rangle_{\mathcal V}(w, \!\sigma) \! &= \! \begin{cases} \! [\one], \! \! \! \!& \mbox{if } \langle \beta \rangle_{\mathcal V}(w, \! \sigma) \! =\! \varepsilon, \\ \varepsilon, & {\rm otherwise} \end{cases} 
\end{aligned} \quad
\begin{aligned}
\langle \varphi_1 \! \vee \! \varphi_2 \rangle_{\mathcal V}(w, \! \sigma)\! &=\! \langle \varphi_1 \rangle_{\mathcal V}(w, \! \sigma)\! \oplus \!\langle \varphi_2 \rangle_{\mathcal V}(w, \! \sigma) \\
\langle \varphi_1 \! \wedge \! \varphi_2 \rangle_{\mathcal V}(w, \! \sigma) \! &= \! \langle \varphi_1 \rangle_{\mathcal V}(w, \! \sigma) \! \cdot \! \langle \varphi_2 \rangle_{\mathcal V}(w, \!\sigma) \\
\langle \exists x \varphi \rangle_{\mathcal V}(w, \! \sigma) \! &=  \!\! \! \! \! \! \! \bigoplus\limits_{i \in \dom(w)} \! \! \! \! \! \langle \varphi \rangle_{\mathcal V \cup \{x\}} (w, \sigma[x/ i]) \\
\langle \exists X  \!\varphi \rangle_{\mathcal V}(w, \!\sigma) \! &= \! \! \! \! \! \bigoplus\limits_{I  \subseteq  \dom( \!w \!)}  \!\!\!\!\langle \varphi \rangle_{\mathcal V \cup \{X\}} (w, \sigma[X / I])\\
\langle \forall x \varphi \rangle_{\mathcal V}(w, \! \sigma) \! &= \! \Val \! \left( \!\langle \varphi \rangle_{\mathcal V \cup  \{ \!x\!\}}(w, \! \sigma[x / i] ) \!\right)_{i \in \dom(w)} \\
\langle \forall X \beta \rangle_{\mathcal V} (w, \! \sigma) \! &= \! \Val \! \left( \!\langle \beta \rangle_{\mathcal V \cup \{ \! X\!\}} \!(w, \! \sigma[X / I]) \right)_{\! I  \subseteq  \dom( \!w \!)} 
\end{aligned} \nonumber
\end{equation}
} 
}
\caption{The auxiliary multiset semantics of multi-weighted MSO formulas over a pv-structure}
\label{Table:Semantics}
\end{table}
Here, $x,y,X \in \mathcal V, a \in \Sigma$, $m \in M$, $\beta$ is a boolean formula and $\varphi, \varphi_1, \varphi_2$ are multi-weighted formulas. In Table \ref{Table:Semantics}, for the semantics of $\forall X \varphi$ the subsets $I \subseteq \dom(w)$ are enumerated in some fixed order, e.g. lexicographically. For a formula $\varphi$, we put $\langle \varphi \rangle = \langle \varphi \rangle_{\Free(\varphi)}$. Then, we define the {\it semantics} $\li \varphi \re: \Sigma_{\varphi}^+ \to K$ as the composition $\li \varphi \re = \Phi \circ \langle \varphi \rangle$.

\begin{example}
\label{Example:GeometryLogics}
Let $A$ be an object on the plane whose displacement is managed by two types of commands: $\leftrightarrow$ and $\updownarrow$. After receiving the command $\leftrightarrow$ the object moves one step to the left or to the right; after receiving $\updownarrow$ one step up or down. Consider the $\mathbb R$-valuation structure $(\mathbb R^2, \Val, \Phi)$ from Example \ref{Example:F4}. We define $\diamond$ as the componentwise sum of vectors and put $\one = (0, 0)$. Then, $\mathcal M = (\mathbb R^2, \Val, \diamond, \one, \Phi)$ is an $\mathbb R$-pv-structure. Consider the following multi-weighted MSO sentence over the alphabet $\Sigma = \{\leftrightarrow, \updownarrow\}$ and the $\mathbb R$-pv-structure $\mathcal M$:
$$\varphi = \forall x ((P_{\leftrightarrow}(x) \to ({ (-1,0)} \vee (1,0))) \wedge (P_{\updownarrow}(x) \to ((0,-1)\vee (0,1))))$$ where, for a boolean formula $\varphi$ and a multi-weighted formula $\psi$, $\beta \to \psi$ is an abbreviation for $(\beta \wedge \psi) \vee \lnot \beta$.
For every sequence of commands $w \in \Sigma^+$, the multiset $\langle \varphi \rangle(w)$ contains all possible displacement vectors of the object. For example, let $w = \leftrightarrow \leftrightarrow$. The object has 4 possibilities to move: 1) two steps to the right; 2) one step to the right and then to the home position; 3) one step to the left and then to the home position; 4) two steps to the left. Then $\langle \varphi \rangle(w) = [(2, 0), (0, 0), (0, 0), (-2, 0)]$. The average displacement of the object is given by $\li \varphi \re$ for each sequence of commands $w$. For example, $\li \varphi \re(\leftrightarrow \leftrightarrow) = 1$, $\li \varphi\re (\leftrightarrow \updownarrow) = \sqrt{2}.$ 
\end{example}
\begin{comment}
We say that the pv-structure $\mathcal M$ is {\it left-multiplicative} if, for all $n \ge 1$ and $m, m_1, ..., m_n \in M$: $m \diamond \Val(m_1, ..., m_n) = \Val(m \diamond m_1, m_2, ..., m_n)$. We call $\mathcal M$ $\Val$-{\it commutative} if for any two sequences $(m_1, ..., m_n)$ and $(m_1', ..., m_n')$ in $M$, we have $\Val(m_1, ..., m_n) \diamond \Val(m_1', ..., m_n') = \Val(m_1 \diamond m_1', ..., m_n \diamond m_n')$. We call $\mathcal M$ {\it associative} if $\diamond$ is associative. Similar requirements also were important in \cite{DM12}.

Consider the valuation structures of Examples \ref{Example:F1}, \ref{Example:F2} and \ref{Example:F3}. If we define $\diamond$ as the componentwise sum of vectors, then the corresponding pv-structures are associative, $\Val$-commutative and left-multiplicative. Next, consider the valuation structure of Example \ref{Example:F3}. If we define $\diamond$ by $(x_1, t_1) \diamond (x_2, t_2) = (x_1 + \lambda^{t_1} x_2, t_1 + t_2)$, then the corresponding pv-monoid is associative, left-multiplicative, but not $\Val$-commutative.
\end{comment}

Note that the multi-weighted MSO logic over $K$-pv-structures contains the case of weighted MSO logic over semirings (cf. \cite{DG07, DG09}). Hence, in general, multi-weighted MSO logic is expressively more powerful than multi-weighted automata. 
%For counterexamples, we refer to \cite{DG09}.

Our main result for finite words is the following theorem.

\begin{theorem}
\label{Theorem:Fin}
Let $\Sigma$ be an alphabet, $K$ a set, $\mathcal M\! = \!(M, \Val, \diamond, \one, \Phi)$ a $K$-pv-structure and $s\!: \Sigma^+ \!\! \to \! K$. Then $s = ||\mathcal A||$ for some multi-weighted automaton \!$\mathcal A$ over $\Sigma$ and $\mathcal M$ iff $s = \li \varphi \re$ for a syntactically restricted multi-weighted MSO sentence $\varphi$ over $\Sigma$ and $\mathcal M$.
\end{theorem}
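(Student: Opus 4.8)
The plan is to prove both directions by reducing to the known Büchi-type result of \cite{DM12} for weighted automata over \emph{product valuation monoids} (pv-monoids), applied to the pv-monoid of finite multisets. Concretely, one first observes that $\mathbb N \langle M \rangle$ carries a natural pv-monoid structure $(\mathbb N\langle M\rangle, \oplus, \Val, \varepsilon, \cdot, [\one])$, where $\oplus$ is multiset union (with neutral element $\varepsilon$), $\cdot$ is the Cauchy product (with neutral element $[\one]$ since $m\diamond\one = \one\diamond m = m$), and $\Val$ is the lifted valuation function from Section~3; one must check the pv-monoid axioms from \cite{DM12}, in particular that $\Val([m], [\one], \dots, [\one]) = [m]$ holds because $\Val(m,\one,\dots,\one) = m$ in $\mathcal M$. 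The key feature is that this is a pv-monoid, so the semantics of weighted MSO formulas over it \emph{can} be defined inductively, and our Table~\ref{Table:Semantics} is, by inspection, exactly the \cite{DM12}-semantics of $\varphi$ read as a weighted MSO formula over this pv-monoid; thus $\langle\varphi\rangle = \li\varphi\re_{\mathbb N\langle M\rangle}$, the pv-monoid semantics. Likewise, for a multi-weighted automaton $\mathcal A$ over $\mathcal M$, the intermediate map $|\mathcal A| : \Sigma^+ \to \mathbb N\langle M\rangle$ is precisely the behavior of the same automaton viewed over the pv-monoid $\mathbb N\langle M\rangle$, with transition weights $[\gamma(t)]$ — since the product in $\mathbb N\langle M\rangle$ of simple multisets composes via $\diamond$ and $\oplus$ counts paths, accepting-path counting and $\Val$-evaluation coincide on both sides.

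With these two identifications in hand, the argument is: $s = \|\mathcal A\|$ for some multi-weighted automaton $\mathcal A$ over $\mathcal M$ iff $s = \Phi \circ |\mathcal A|$ iff (by the above) $s = \Phi \circ r$ for some recognizable series $r : \Sigma^+ \to \mathbb N\langle M\rangle$ over the pv-monoid $\mathbb N\langle M\rangle$; and $s = \li\varphi\re$ for a syntactically restricted $\varphi$ iff $s = \Phi\circ\langle\varphi\rangle$ iff $s = \Phi\circ r$ for some series $r$ definable by a syntactically restricted weighted MSO sentence over $\mathbb N\langle M\rangle$. So it suffices to show that the recognizable series over $\mathbb N\langle M\rangle$ coincide with the series definable by syntactically restricted sentences over $\mathbb N\langle M\rangle$ — and this is exactly the theorem of \cite{DM12} (their Büchi-type theorem for pv-monoids). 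One must only verify that the pv-monoid $\mathbb N\langle M\rangle$ satisfies whatever structural hypotheses \cite{DM12} requires (e.g. the relevant distributivity/associativity conditions on $\oplus$, $\cdot$, $\Val$), and that the syntactic-restriction condition used there matches the one defined in our Section~3 under the translation; both of these are routine but must be stated. The effectiveness claim follows because the constructions in \cite{DM12} are effective and the passage through $\mathbb N\langle M\rangle$ introduces no new non-effective steps (the pv-monoid operations on simple multisets are computable from $\diamond$ and $\Val$).

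The main obstacle I anticipate is matching the \emph{syntactic fragments} exactly. The weighted MSO logic of \cite{DM12} over pv-monoids has its own notion of restricted formula (controlling where universal quantification and conjunction may occur, relative to a notion of ``almost boolean'' or ``recognizable step function'' formula), and our Section~3 definition of \emph{syntactically restricted} and \emph{almost boolean} must be shown to correspond to it under the reinterpretation $\varphi \mapsto$ (same $\varphi$ over $\mathbb N\langle M\rangle$). In particular one has to check that a boolean formula $\beta$ of our logic has, as a weighted formula over $\mathbb N\langle M\rangle$, semantics a recognizable step function taking values in $\{[\one], \varepsilon\}$ (the ``true/false'' values of the pv-monoid), so that it plays the role of a boolean formula in the \cite{DM12} sense; and that constants $m$ map to the pv-monoid constants $[m]$, which are again recognizable step functions. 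Once this dictionary is set up, the remaining work — closure of the restricted fragment under the relevant operations and the two translations — is inherited verbatim from \cite{DM12}, so the bulk of the proof is bookkeeping rather than new combinatorics.
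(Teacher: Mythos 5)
Your overall strategy --- lift everything to the pv-monoid $(\Mult, \oplus, \Val, \cdot, \varepsilon, [\one])$ of finite multisets, identify $\langle\varphi\rangle$ and $|\mathcal A|$ with the \cite{DM12}-semantics and \cite{DM12}-behavior there, and then invoke the B\"uchi-type theorem of \cite{DM12} --- is exactly the route the paper takes (it is spelled out for the infinite-word case and follows the same diagram for finite words). The verifications you defer (that $\Mult$ is a pv-monoid, that it is \emph{regular} so that the relevant part of the \cite{DM12} theorem applies, and that the syntactic fragments match under $m \mapsto [m]$) all go through as you expect.

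However, one concrete step is missing, and it is not bookkeeping: your chain of ``iff''s silently identifies \emph{recognizable series over $\Mult$} with \emph{series recognizable by automata whose transition weights are simple multisets $[m]$}. The forward translations land in $\Mon(M) = \{[m] \; | \; m \in M\}$, but the \cite{DM12} translation from a restricted sentence back to an automaton does not: already for $\forall x\,\psi$ with $\psi$ almost boolean, the construction produces an automaton whose transition weights are the coefficients of the recognizable step function $[\![\psi]\!]$, and these are genuinely non-simple finite multisets in general (e.g.\ $[m_1]\oplus[m_2]$ arising from a disjunction of constants). Such an automaton cannot be reinterpreted as a multi-weighted automaton over $\mathcal M$, whose weights must be single elements of $M$. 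To close the loop you therefore need an extra lemma: every weighted automaton over $\Mult$ is equivalent to one with weights in $\Mon(M)$. The paper proves this by \emph{unfolding}: each transition $t$ with weight $\gamma(t)$ is replaced by $\sum_{m}\gamma(t)(m)$ parallel transitions into fresh states $(q,m,i)$ carrying weight $[m]$, and an explicit bijection between accepting paths of the two automata shows that the $\oplus$-sum of path weights is preserved (this uses that the Cauchy product of simple multisets multiplies multiplicities). A symmetric normalization for formulas (replacing a constant $r$ by $\bigvee_{m\in\supp(r)}\bigvee_{i=1}^{r(m)}[m]$) handles the other interface, though there one can alternatively note that the \cite{DM12} automaton-to-formula translation introduces only the transition weights as constants. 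Without the unfolding lemma your argument yields two inclusions into the $\Mult$-level classes, not the stated equivalence.
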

The proof is similar to the proof of the corresponding Theorem \ref{Theorem:Inf} for infinite words. For lack of space, we skip it.

We consider examples of decision problems for multi-weighted MSO logic.

\begin{example}
\label{Example:Emp1}
Let $\Sigma$ be an alphabet and ${\mathcal M = (\mathbb Q \times \mathbb Q_{\ge 0}, \Val, \diamond, (0, 0), \Phi)}$ the $\overline{\mathbb R}$-pv-structure where $\diamond$ is the componentwise sum, and $\Val$ and $\Phi$ are defined as in Example \ref{Example:F1}. Let $\varphi$ be a multi-weighted MSO sentence over $\Sigma$ and $\mathcal M$, and $\nu \in \mathbb Q$ a thres\-hold. The {\it $\ge\!\nu$-emptiness problem} is whether there exists a word $w \in \Sigma^+$ such that $\li \varphi \re(w) \! \ge \! \nu$. If $\varphi$ is syntactically restricted, then, using our Theorem \ref{Theorem:Fin}, we can effectively translate $\varphi$ into a multi-weighted automaton over $\Sigma$ and $\mathcal M$. Then $\ge  \!\! \nu$-emptiness for these multi-weighted automata can be decided in the following way. First, we use a shortest path algorithm to decide whether there exists a path with cost $0$, i.e. $||\mathcal A ||(w)\! =\! \infty \! \ge \! \nu$ for some $w$. If this is not the case (i.e. the costs of all accepting paths in $\mathcal A$ are strictly positive), we use the same technique as for the $\ge \! \! \nu$-emptiness problem for ratio automata with strictly positive costs (cf. \cite{FGR11}, Theorem 3). We replace the weight $(r, c)$ of every transition by the single value $r - \nu c$ and obtain a weighted automaton $\mathcal A'$ over the max-plus semiring $\mathbb Q \cup \{-\infty\}$. Then, $||\mathcal A||(w) \ge \nu$ iff the semiring-behavior of $\mathcal A'$ on $w$ is not less than zero. Then, the decidability of our problem follows from the decidability of the $\ge\!\! 0$-emptiness problem for max-plus automata.
\end{example}

\begin{example}
Let $\Sigma$ be an alphabet and $\mathcal M = (\mathbb Q^2, \Val, \diamond, (0, 0), \Phi)$ where $\diamond$ is the componentwise sum, and $\Val$ and $\Phi$ are as in Example \ref{Example:F2}. Again, using our Theorem \ref{Theorem:Fin}, we can reduce the $\le\!\! \nu$-emptiness problem (defined similarly as in Example \ref{Example:Emp1}) for syntactically restricted multi-weighted MSO logic over $\Sigma$ and $\mathcal M$ to the emptiness problem for multi-weighted automata. This problem is decidable, since the optimal conditional reachability for multi-priced timed automata is decidable \cite{LR05}.
\end{example}

\section{Multi-weighted Automata and MSO Logic on Infinite Words}

In this section, we develop a general model for both multi-weighted automata and MSO logic on infinite words. Recall that, for a set $M$, $\eN \li M \re$ is the collection of all multisets over $M$. Let $M^{\omega}$ denote the set of all $\omega$-infinite words over $M$.

\begin{definition}
Let $K$ be a set. A {\rm product $K$-$\omega$-valuation structure ($K$-$\omega$-pv structure)} is a tuple $(M, \valo, \diamond, \one, \Phi)$ where  
\begin{itemize}
\item $M$ is a set, $\one \in M$ and $\Phi: \eN \li M \re \to K$;
\item $\valo: M^{\omega} \to M$ with $\valo(m \one^{\omega}) = m$ \,for all $m \in M$;
\item $\diamond: M \times M \to M$ such that $m \diamond \one = \one \diamond m = m$  for all $m \in M$.
\end{itemize}
\end{definition}
A {\it Muller automaton} over an alphabet $\Sigma$ is a tuple $\mathcal A = (Q, I, T, \mathcal F)$ where $Q$ is a set of states, $I \subseteq Q$ is a set of initial states, $T \subseteq Q \times \Sigma \times Q$ is a transition relation and $\mathcal F \subseteq 2^Q$ is a Muller acceptance condition. {\it Infinite paths} ${\pi = (t_i)_{i \in \omega}}$ of $\mathcal A$ are defined as infinite sequences of matching transitions, say ${t_i = (q_i, a_i, q_{i+1})}$. Then we call the word $w = (a_i)_{i \in \omega}$ the {\it label} of the path $\pi$ and $\pi$ a path on $w$. We say that a path $\pi = (q_i, a_i, q_{i+1})_{i \in \omega}$ is {\it accepting} if $q_0 \in I$ and $\{{q \in Q} \; | \; {q = q_i} \text{ for infinitely many } {i \in \omega}\} \in \mathcal F$. Let $\Acc_{\mathcal A} (w) $ denote the set of all accepting paths of $\mathcal A$ on $w$.

For the rest of this section, we fix an alphabet $\Sigma$ and a $K$-$\omega$-pv structure $ {\mathcal M = (M, \valo, \diamond, \one, \Phi)}$.

\begin{definition}
A {\rm multi-weighted Muller automaton} over $\Sigma$ and $\mathcal M$ is a tuple ${\mathcal A = (Q, I, T, \mathcal F, \gamma)}$ where $(Q, I, T, \mathcal F)$ is a Muller automaton and $\gamma: T \to M$.
\end{definition}
Let $\mathcal A$ be a multi-weighted Muller automaton over $\Sigma$ and $\mathcal M$, $w \in \so$ and $\pi = (t_i)_{i \in \omega}$ an accepting path on $w$. The {\it weight} of $\pi$ is defined by $\Weight_{\mathcal A}(\pi) = \valo(\gamma(t_i))_{i \in \omega}$. Let $|\mathcal A | (w) \in \eN \li M \re$ be the multiset containing the weights of paths in $\Acc_{\mathcal A}(w)$. Formally, ${|\mathcal A| (w)(m) = |\{\pi \in \Acc_{\mathcal A}(w) \; | \; \Weight_{\mathcal A}(w) = m \}}|$ where, for an infinite set $X$, we put $|X| = \infty$. The {\it behavior} of $\mathcal A$ is the $\omega$-series $||\mathcal A||: \so \to K$ defined by $||\mathcal A||(w) = \Phi(| \mathcal A | (w))$.

\begin{remark}
The multiplication $\diamond$, the unital element $\one$ and the condition ${\valo(m \one^{\omega}) = m}$ are irrelevant for the definition of the behaviors of multi-weighted automata. However, they will be used to describe the semantics of multi-weighted MSO formulas.
\end{remark}
We consider several examples of multi-weighted automata $\mathcal A$ over $\Sigma$ and $\mathcal M$, and their behaviors.

\begin{example}
\label{Example:I1}
Consider the reward-cost ratio setting of Example \ref{Example:F1} for infinite words. 
For a sequence $(r_i, c_i)_{i \in \omega} \in (\mathbb R \times \mathbb R_{\ge 0})^{\omega}$ of reward-cost pairs, the {\it supremum ratio} (cf. \cite{BBL04}) is defined by $\limsup\limits_{n \to \infty} \frac{\sum_{i = 0}^n r_i}{\sum_{i = 0}^n c_i} \in \overline{\mathbb R}$ where $\frac{r}{0} = \infty$. Unfortunately, since $\sum_{i = 0}^{\infty} r_i$ and $\sum_{i = 0}^{\infty} c_i$ may not exist or may be infinite, we cannot proceed as for finite words by considering pairs of accumulated rewards and costs and their ratios. Instead, we can define $\mathcal M$ as follows.  Let $M = \overline{\mathbb R} \times \mathbb R_{\ge 0}$, $K = \overline{\mathbb R}$ and $\one = (0, 0)$. Let $\mu = (r_i, c_i)_{i \in \omega} \in (\mathbb R \times \mathbb R_{\ge 0})^{\omega}$. If $\sum_{i = 0}^{\infty} r_i$ and $\sum_{i = 0}^{\infty} c_i$ are finite, then we put $\valo(\mu) = (\sum_{i = 0}^{\infty} r_i, \sum_{i = 0}^{\infty} c_i)$. Otherwise, we put $\valo(\mu) = \left( \limsup\limits_{n \to \infty} \frac{\sum_{i = 0}^n r_i}{\sum_{i = 0}^n c_i}, 1 \right)$. For sequences $\mu \in M^{\omega} \setminus (\mathbb R \times \mathbb R_{\ge 0})^{\omega}$, we define $\valo(\mu)$ arbitrarily keeping $\valo(m \one^{\omega}) = m$. Let also $\diamond$ be the componentwise sum where $\infty \! + \! (\text{-}\infty)$ is defined arbitrarily. The evaluator function $\Phi$ is defined by $\Phi(r) = \sup\limits_{(x, y) \in \supp(r)} \frac{x}{y}$. Then, $||\mathcal A||(w)$ is the maximal supremum ratio of accepting paths of $w$. The corresponding model for timed automata was considered in \cite{BBL04, BBL08}.
\end{example}

\begin{example}
\label{Example:I2}
Let $E_{\max} = (E_{\max}^1, ..., E_{\max}^n) \in \mathbb Z^n$ where $E_{\max}^i > 0$ for all $i$, and $M = [-E_{\max}, E_{\max}] \subseteq \mathbb Z^n$, i.e. $M$ consists of all vectors $(v^1, ..., v^n) \in \mathbb Z^n$ such that $-E_{\max}^i \le v^i \le E_{\max}^i$  for each $i \in \{1, ..., n\}$. Let $K = \mathbb B = \{\text{false}, \text{true}\}$, the boolean semiring and ${\one = (0, ..., 0)}$. For $u_1 = (u_1^1, ..., u_1^n)$ and $u_2 = (u_2^1, ..., u_2^n) \in M$, we put $u_1 \diamond u_2 = (v^1, ..., v^n)$ where $v^i = \max \{\min\{u_1^i + u_2^i, E_{\max}^i \}, -E_{\max}^i\}$. For $(m_i)_{i \in \omega} \in M^{\omega}$ we define the sequence $(v_i)_{i \in \omega}$ in $M$ as follows. We put $v_0 = (0, ..., 0)$ and $v_{i+1} = v_i \diamond m_i$ for all $i \in \omega$. Then, let $\valo((m_i)_{i \in \omega}) = (x^1, ..., x^n) \in M$ where $x^j = \inf \{v_i^j \; | \; i \in \omega\}$ for all $1 \le j \le n$. 
Let $\Phi$ be defined by $\Phi(r) =  \text{true}$ iff there exists ${(m^1, ..., m^n) \in \supp(r)}$ with $m^j \ge 0$ for all $1 \le j \le n$. This model corresponds to the one-player energy games considered in \cite{FJLS11}.
\end{example}
The syntax of the {\it multi-weighted MSO logic} over $\Sigma$ and $\mathcal M$ is defined exactly as for finite words (cf. Section 3). To define the semantics of this logic, we proceed similarly as for finite words, i.e. by means of the auxiliary multiset semantics. For this, we consider $\eN$ as the totally complete semiring $(\eN, +, \cdot, 0, 1)$ (cf. \cite{DG09}) where $0 \cdot \infty \! = \! \infty \cdot 0 \! = \! 0$. The sum $\oplus$ and the Cauchy product $\cdot$ for infinite multisets from $\eN \li M \re$ are defined as for finite words. The $\omega$-valuation $\valo(r_i)_{i \in \omega}$ for $r_i \in \eN \li M \re$ is defined for all $m \in M$ by
$$\valo((r_i)_{i \in \omega})(m) = \sum \left( \prod\nolimits_{i \in \omega} r_i (m_i) \; | \; (m_i)_{i \in \omega} \in M^{\omega} \text{ and } \valo(m_i)_{i \in \omega} = m \right).$$
The {\it empty multiset} $\varepsilon \in \eN \li M \re$ and {\it simple multisets} $[m] \in \eN \li M \re$ (for $m \in M$) are defined in the same way as for finite words. Let $\Mon(M) = \{[m] \; | \; m \in M\}$.

Let $\varphi$ be a multi-weighted MSO formula over $\Sigma$ and $\mathcal M$, and $\mV \supseteq \Free(\varphi)$. We define the auxiliary multiset semantics $\langle \varphi \rangle_{\mV}: \sov \to \eN \li M \re$ inductively on the structure of $\varphi$ as in Table \ref{Table:Semantics} where we have to replace $\Val$ by $\Val^{\omega}$. For $w \in \so$, we let $\dom(w) = \omega$. To define the semantics $\langle \forall X \varphi \rangle$, we have to extend $\valo$ for multisets to index sets of size continuum such that $\valo((r_i)_{i \in I}) = \varepsilon$ whenever $r_i = \varepsilon$ for some $i \in I$, and $\valo(([\one])_{i \in I}) = [\one]$. The {\it semantics} of $\varphi$ is defined by $\li \varphi \re = \Phi \circ \langle \varphi \rangle$.

\begin{example}
Assume that a bus can operate two routes A and B which start and end at the same place. The route R lasts $t_{\rm R}$ time units and profits $p_{\rm R}$ money units on the average per trip, for ${\rm R} \in \{{\rm A}, {\rm B}\}$. We may be interested in making an infinite schedule for this bus which is represented as an infinite sequence from $\{{\rm A,B}\}^{\omega}$. This schedule must be fair in the sense that both routes A and B must occur infinitely often in this timetable (even if the route A or B is unprofitable). The optimality of the schedule is also preferred (we wish to profit per time unit as much as possible). We consider the $K$-$\omega$-pv structure $\mathcal M$ from Example \ref{Example:I1} and a one-element alphabet $\Sigma = \{\tau\}$ which is irrelevant here.
Now we construct a weighted MSO sentence $\varphi$ over $\Sigma$ and $\mathcal M$ to define the optimal income of the bus per time unit (supremum ratio between rewards and time):
$$
\varphi = \exists X \! \! \left(\stackrel{\infty}{\exists}\!\!x(x \! \in \! X \!) \; \wedge \stackrel{\infty}{\exists}\!\!x(x \! \notin \! X \!) \wedge  
\forall x ((x \! \in \! X \! \to \! (p_{\rm A}, t_{\rm A})) \wedge (x \! \notin \! X \! \to \! (p_{\rm B}, t_{\rm B}))\right)
$$
where $\stackrel{\infty}{\exists}\!\!x \psi$ is an abbreviation for a boolean formula $\forall y( \lnot \forall x( \lnot (y \le x \wedge \psi)))$. Here, the second order variable $X$ corresponds to the set of positions in an infinite schedule which can be assigned to the route A. Then,
$$|\varphi|(\tau^{\omega}) = \sup \left \{ \limsup_{n \to \infty} \frac{p_{\rm A} \cdot |I \cap \overline{n}| + p_{\rm B} \cdot |I^c \cap \overline{n}| } {t_{\rm A} \cdot |I \cap \overline{n}| + t_{\rm B} \cdot |I^c \cap \overline{n}|} \; | \; I \subseteq \mathbb N \text{ with } I, I^c \text{ infinite}\right \}$$
where $\overline{n} = \{0, ..., n\}$ and $I^c = \mathbb N \setminus I$.
\end{example}
Now we state our main result for infinite words.
\begin{theorem}
\label{Theorem:Inf}
Let $\Sigma$ be an alphabet, $K$ a set and $\mathcal M = (M, \valo, \diamond, \one, \Phi)$ a $K$-$\omega$-pv structure. Let $s: \so \to K$ be an $\omega$-series. Then
$s = ||\mathcal A||$ for some multi-weighted Muller automaton $\mathcal A$ over $\Sigma$ and $\mathcal M$ iff $s = \li \varphi \re$ for some syntactically restricted multi-weighted MSO sentence $\varphi$ over $\Sigma$ and $\mathcal M$.
\end{theorem}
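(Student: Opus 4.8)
The plan is to reduce Theorem~\ref{Theorem:Inf} to the B\"uchi-type theorem of Droste and Meinecke \cite{DM12} for weighted Muller automata and weighted MSO logic over product $\omega$-valuation monoids, applied to the monoid $\mathbb D = (\eN\li M\re, \oplus, \valo, \cdot, \varepsilon, [\one])$ of (possibly infinite) multisets over $M$. Since $\li\varphi\re = \Phi\circ\langle\varphi\rangle$ and $||\mathcal A|| = \Phi\circ|\mathcal A|$, it suffices to prove, at the level of $\eN\li M\re$-valued $\omega$-series, that the series of the form $|\mathcal A|$ (for multi-weighted Muller automata $\mathcal A$ over $M$) coincide with those of the form $\langle\varphi\rangle$ (for syntactically restricted multi-weighted MSO sentences $\varphi$ over $M$); composing with $\Phi$ then gives the theorem for arbitrary $K$. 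As a preliminary I would check that $\mathbb D$ is indeed a product $\omega$-valuation monoid: $(\eN\li M\re, \oplus, \varepsilon)$ is a complete commutative monoid because $\eN$ is a totally complete semiring; $[\one]$ is a unit for the Cauchy product because $\one\diamond m = m\diamond\one = m$; and $\valo(r\cdot[\one]^\omega) = r$, $\valo([\one]^\omega) = [\one]$ and $\valo((r_i)_{i\in\omega}) = \varepsilon$ whenever some $r_i = \varepsilon$ all follow from the definition of $\valo$ on $\eN\li M\re$ together with the axiom $\valo(m\one^\omega) = m$. Finally I would verify that $\mathbb D$ satisfies the hypotheses under which \cite{DM12} proves the equivalence for the syntactically restricted (that is, $\forall$-restricted and strongly $\wedge$-restricted) fragment.

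The next ingredient is two pairs of translations between the multi-weighted world over $M$ and the weighted world over $\mathbb D$. On the logic side, replacing each weight constant $m\in M$ by the simple multiset $[m]$ turns a syntactically restricted multi-weighted MSO sentence $\varphi$ over $M$ into a syntactically restricted weighted MSO sentence $\varphi'$ over $\mathbb D$, and a structural induction -- the operations $\oplus,\cdot,\valo$ and the handling of boolean subformulas being literally the same in both settings -- gives $\langle\varphi\rangle = \sL\varphi'\sR_{\mathbb D}$; conversely, a syntactically restricted weighted MSO sentence $\psi$ over $\mathbb D$ all of whose constants are finite multisets translates back by replacing every constant $r\neq\varepsilon$ by the finite (hence almost boolean) disjunction $\bigvee_{m\in\supp(r)}\bigvee_{i=1}^{r(m)} m$ and $\varepsilon$ by the boolean formula $\forall x(\lnot(x\le x))$, which preserves the syntactic restrictions and yields a $\varphi$ with $\sL\psi\sR_{\mathbb D} = \langle\varphi\rangle$. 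On the automaton side, replacing each transition weight $m$ by $[m]$ turns a multi-weighted Muller automaton $\mathcal A$ over $M$ into a weighted Muller automaton $\mathcal A'$ over $\mathbb D$, and, using $\valo([\gamma(t_i)])_{i\in\omega} = [\Weight_{\mathcal A}(\pi)]$ for every accepting path $\pi$, one obtains $||\mathcal A'||_{\mathbb D} = |\mathcal A|$; conversely, every finite weighted Muller automaton $\mathcal B$ over $\mathbb D$ whose transition weights are finite multisets can be unfolded into a multi-weighted Muller automaton $\mathcal A$ over $M$ by splitting a transition of weight $r$ into $\sum_{m\in\supp(r)} r(m)$ parallel transitions carrying the weights $m\in\supp(r)\subseteq M$, recording the chosen weight in the target state, and calling a set of states accepting iff its projection to $Q_{\mathcal B}$ lies in $\mathcal F_{\mathcal B}$; a computation in the totally complete semiring $\eN$ then gives $|\mathcal A| = ||\mathcal B||_{\mathbb D}$.

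Assembling, for ``automata $\Rightarrow$ logic'' one passes from $\mathcal A$ to $\mathcal A'$ with $||\mathcal A'||_{\mathbb D} = |\mathcal A|$, applies \cite{DM12} to obtain a syntactically restricted $\psi$ over $\mathbb D$ with $\sL\psi\sR_{\mathbb D} = |\mathcal A|$, observes (by inspection of that construction) that the constants of $\psi$ occur among the transition weights of $\mathcal A'$ together with $[\one]$ and $\varepsilon$ and hence are finite multisets, and translates $\psi$ back to a syntactically restricted $\varphi$ over $M$ with $\langle\varphi\rangle = |\mathcal A|$, so that $\li\varphi\re = ||\mathcal A||$; for ``logic $\Rightarrow$ automata'' one passes from $\varphi$ to $\varphi'$ over $\mathbb D$ with $\langle\varphi\rangle = \sL\varphi'\sR_{\mathbb D}$, applies \cite{DM12} to obtain a finite weighted Muller automaton $\mathcal B$ over $\mathbb D$ with $||\mathcal B||_{\mathbb D} = \sL\varphi'\sR_{\mathbb D}$ whose transition weights are finite multisets -- each almost-boolean subformula of $\varphi'$ takes only finitely many values, each a finite $\oplus$- and $\cdot$-combination of simple multiset constants -- and unfolds $\mathcal B$ into a multi-weighted Muller automaton $\mathcal A$ over $M$ with $|\mathcal A| = \langle\varphi\rangle$, so that $||\mathcal A|| = \li\varphi\re$.

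I expect the unfolding step for infinite words to be the main obstacle: one must show that an accepting path of $\mathcal B$ and its resolved infinite paths in $\mathcal A$ are accepting simultaneously and that the multiset of weights of accepting paths is preserved. This uses crucially that the transition weights of $\mathcal B$ are \emph{finite} multisets: together with finiteness of $\mathcal B$ this makes each state of $\mathcal B$ have only finitely many copies in $\mathcal A$, so a pigeonhole argument shows that the set of states a resolved $\mathcal A$-path visits infinitely often projects \emph{onto} -- not merely into -- the corresponding set of the underlying $\mathcal B$-path, which is what makes the Muller acceptance descend correctly and the path-weight multiset match. This finiteness is exactly the ``certain property'' of the transition weights that allows the passage from the multiset automaton to a genuine multi-weighted automaton; it is also the reason the result must be stated through $\langle\cdot\rangle$ and $\Phi$ rather than through an inductively defined $K$-valued semantics, since an infinite-multiset constant that \cite{DM12} could a priori use cannot be rewritten as a finite disjunction.
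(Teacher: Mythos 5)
Your proposal follows essentially the same route as the paper: pass to the $\omega$-pv monoid $(\eN\li M\re,\oplus,\valo,\cdot,\varepsilon,[\one])$, translate constants and transition weights $m\leftrightarrow[m]$ on both the logic and the automaton side, invoke the B\"uchi-type result of \cite{DM12} at the multiset level, and unfold finite-multiset-weighted Muller automata into $\Mon(M)$-weighted ones via the product-style Muller condition; you also correctly isolate the finiteness of the transition weights as the property that makes both the unfolding and the descent of Muller acceptance work. The one step you should not treat as routine is ``verify that $\mathbb D$ satisfies the hypotheses under which \cite{DM12} proves the equivalence'': for the strongly restricted fragment that hypothesis is regularity, and $\eN\li M\re$ is \emph{not} regular in general, since an arbitrary (possibly infinite) multiset need not be realizable as the constant behavior of any finite weighted Muller automaton over an arbitrary $K$-$\omega$-pv structure. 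This is exactly why the paper does not cite the equivalence theorem of \cite{DM12} as a black box for the direction from logic to automata, but instead re-runs the induction of its proof restricted to formulas whose constants lie in $\Mon(M)$, using $\valo(m\one^\omega)=m$ to realize simple (hence all finite) multisets as constant behaviors and checking at each inductive step that the transition weights of the constructed automata stay in $\Mult$. Your own parenthetical justification --- that each almost boolean subformula takes only finitely many values, each a finite $\oplus$- and $\cdot$-combination of simple multiset constants --- is precisely the substitute for regularity that makes this restricted induction go through, so the repair is already implicit in your write-up; it just needs to be promoted from an ``inspection'' of the cited construction to the actual inductive argument. (The automata-to-logic direction, by contrast, can safely cite \cite{DM12}, since there the constants of the produced sentence are drawn from the transition weights and no regularity is needed.)
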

In the rest of this section, we give the proof idea of this theorem. Let $\mwMA(\Sigma, \mathcal M)$ denote the collection of all multi-weighted Muller automata over $\Sigma$ and $\mathcal M$. Let ${\mathcal A \in \mwMA(\Sigma, \mathcal M)}$. We can consider $|\mathcal A|$ as an $\omega$-series $| \mathcal A |: \so \to \eN \li M \re$. We call $|\mathcal A|$ the {\it multiset-behavior} of $\mathcal A$. Then $||\mathcal A|| = \Phi \circ | \mathcal A |$. 
%Let $\li \mwMA(\Sigma, \mathcal M) \re$ denote the collection $\{ \li \mathcal A \re \; | \; \mathcal A \in \mwMA(\mathcal M, \Sigma)\}$. 
Let $\mwMSO^{\text{res}}(\Sigma, \mathcal M)$ denote the set of all syntactically rest\-ricted multi-weighted MSO sentences over $\Sigma$ and $\mathcal M$. %Let $\li \mwMSO^{\text{res}} (\mathcal M, \Sigma) \re = \{ \li \varphi \re \; | \; \varphi \in \mwMSO^{\text{res}} (\mathcal M, \Sigma)\}$. 
Since, for any multi-weighted formula $\varphi$, $\li \varphi \re = \Phi \circ \langle \varphi \rangle$, it suffices to show that $\mwMA(\Sigma, \mathcal M)$ with the multiset-behavior and $\mwMSO^{\text{res}} (\Sigma, \mathcal M)$ with the multiset-semantics are expressively equivalent. 

For this, we can show that $(\eN \li M \re, \oplus, \valo, \cdot, \varepsilon, [\one])$ is an $\omega$-pv monoid as defined in \cite{DM12}. Let $D \subseteq \eN \li M \re$. We denote by $\wMA(\Sigma, D)$ the collection of weighted automata over $\Sigma$ and the $\omega$-pv monoid $\eN \li M \re$ where the weights of transitions are taken from $D$. 
%Let $\mathcal A $. Let $||\mathcal A||: \so \to \eN \li M \re$ denote the behavior of $\mathcal A$ as defined in \cite{DM12} and $||\wMA(D, \Sigma)|| = \{ ||\mathcal A|| \; | \; \mathcal A \in \wMA(D, \Sigma)\}$. 
Let $\wMSO^{\text{res}}(\Sigma, D)$ denote the set of syntactically restricted sentences over $\Sigma$ and the $\omega$-pv monoid $\eN \li M \re$ with constants from $D$. Let $[\![\varphi]\!]$ denote the semantics of $\varphi \in \wMSO^{\text{res}} (\Sigma, \mathcal M)$ as defined in \cite{DM12}. The proof scheme of our result is depicted in Fig. \ref{Fig:Proof_Inf}. Here, $\leftrightarrow$ means the expressive equivalence and $\to$ the expressive inclusion. 

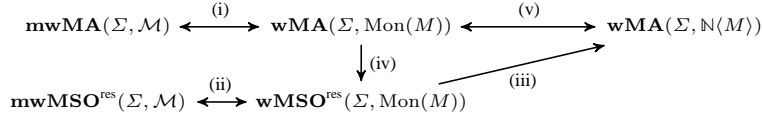
\begin{figure}[t]
\begin{center}
\begin{tikzpicture}[->,>=stealth',shorten >=1pt,auto,node distance=4cm,
                    semithick]
  \tikzstyle{every state}=[minimum size = 0.6 cm, fill=none,draw=black,text=black]

  \node (A) at (0, 0)                   {{\scriptsize $\mwMA(\Sigma, \mathcal M)$}};
  \node (B) at (3.5, 0)     {{\scriptsize $\wMA(\Sigma, \Mon(M))$}};
  \node         (C) at (7.8, 0)  {{\scriptsize $\wMA(\Sigma, \Mult)$}};
  \node (D) at (3.5, -1) {{\scriptsize $\wMSO^{\text{res}} (\Sigma, \Mon(M))$}};
  \node (E) at (0, -1) {{\scriptsize $\mwMSO^{\text{res}} (\Sigma, \mathcal M)$}};
%   \path 
%(A) edge[line width = 2] node[left]{}  (B)
%(B) edge[line width = 2] node[right]{} (D)
%(A) edge node[left]{}  (C)
%(C) edge node[right]{}  (D);
%\node (E) at (0,1.7) {};
\path
(A) edge[<->]  node[above]{{\scriptsize (i) }} (B)
(D) edge[<->]  node[above]{{\scriptsize (ii) }} (E)
(B) edge[<->]  node[above]{{\scriptsize (v) }} (C)
(D) edge[->]  node[below]{{\scriptsize (iii) }} (C)
(B) edge[->]  node[right]{{\scriptsize (iv) }} (D);
%\draw[<->] (D) to (E);
%\draw[<->] (B) to (C);
%\draw[->] (D) to (C); 
%\draw[->] [bend left] (B) to (D); 
\end{tikzpicture}
\caption{The proof scheme of Theorem \ref{Theorem:Inf}}
\label{Fig:Proof_Inf}
\end{center}
\end{figure}

\begin{itemize}
\item [(i)] If we replace the weight $m \in M$ of every transition of a multi-weighted automaton $\mathcal A$ by the simple multiset $[m] \in \Mon(M)$, we obtain a weighted automaton $\mathcal A'$ over the pv monoid $\Mult$ such that the pv-monoid behavior of $\mathcal A'$ is equal to $| \mathcal A |$. Conversely, we can replace the weights $[m]$ in $\mathcal A'$ by $m$ to obtain a multi-weighted automaton with the same behavior.
\item [(ii)] Similarly to (i), we replace the constants $m$ occurring in MSO formulas by simple multisets $[m]$ and vice versa.
\item [(iii)] The proof is based on the proof of Theorem 6.2 (a) of Droste and Meinecke \cite{DM12}. We proceed inductively on the structure of $\varphi \in \wMSO^{\text{res}}( \Sigma, \Mon(M))$. Using the property $\valo(m \one^{\omega}) = m$ for $m \in M$, we show that every almost boolean formula is equivalent to a weighted Muller automaton with weights from $\Mon(M) \subseteq \Mult$. 
Let $\varphi, \varphi_1$ and $\varphi_2$ be weighted MSO formulas with constants from $\Mon(M)$ such that $[\![\varphi]\!]$, $[\![\varphi_1]\!]$ and $[\![\varphi_2]\!]$ are recognizable by weighted Muller automata with weights from $D \subseteq \eN \li M \re$. Let  $\beta$ be any boolean formula. It can be shown that $[\![\varphi_1 \vee \varphi_2]\!]$, $[\![\exists x \varphi ]\!]$, $[\![\exists X \varphi]\!]$ and $[\![\varphi \wedge \beta]\!] = [\![\beta \wedge \varphi]\!]$ are also recognizable by weighted Muller automata with weights from $D$. If $\varphi$ is almost boolean, then $[\![\varphi]\!]$ is an $\omega$-recognizable step function with coefficients from $\Mult$. Using the construction of Lemma 8.11 of \cite{DG09}, cf. Theorem 6.2 of \cite{DM12}, we establish that $[\![\forall x \varphi]\!]$ is recognizable by a weighted automaton with weights from $\Mult$.
\item [(iv)] The proof follows from Theorem 6.2 of \cite{DM12} where a weighted automaton with weights in $D \! \subseteq \! \eN \li  M \re$\! was translated into an MSO sentence 
%in $\wMSO^{\text{res}}(D, \Sigma)$.
 with weights in $D$.
\item [(v)] 
Let $\mathcal A = (Q, I, T, \mathcal F, \gamma) \in \wMA(\Sigma, \Mult)$. We construct an auto\-maton ${\mathcal A' = (Q', I', T', \mathcal F', \gamma') \in \wMA(\Sigma, \Mon(M))}$ with the same behavior by unfolding each single transition of $\mathcal A$ labeled by a finite multiset into several transitions labeled by elements of this multiset as follows. 
\begin{itemize}
\item 
$Q' = I \cup \{(q, m, i): t = (p, a, q) \in T, m \in \supp(\gamma(t)),  1 \le i \le \gamma(t)(m)\}$
\item  $I' = I$, ${\mathcal F' = \{\{(q_1, m_1, k_1), ..., (q_n, m_n, k_n)\} \subseteq Q' \setminus I \; | \; \{q_1, ..., q_n\} \in \mathcal F\}}$.
\item $T' = T_1 \cup T_2$, where $T_1$ consists of all transitions $(p, a, (q, m, i))$ from $I \times \Sigma \times (Q' \setminus I)$ with $(p,a,q) \in T$; $T_2$ consists of all transitions $((q_1, m_1, i_1), a, (q_2, m_2, i_2))$ from $(Q' \setminus I) \times \Sigma \times (Q' \setminus I)$ with $(q_1, a, q_2) \in T$.
\item  For all $t = (q', a, (q,m,i)) \in T'$, let $\gamma'(t) = [m]$.
\end{itemize}
\end{itemize}

\section{Conclusion}

We have extended the use of weighted MSO logic to a new class of multi-weighted settings. 
We just note that, as in \cite{DM12}, for $K$-pv-structures and $K$-$\omega$-pv structures with additional properties there are larger fragments of multi-weighted MSO logic which are still expressively equivalent to multi-weighted automata. 
Since our translations from formulas to automata are effective, we can reduce the decidability problems for multi-weighted logics to the corresponding problems for multi-weighted automata. Decidability results of, e.g., \cite{BBL04, FJLS11, FGR11, LR05} lead to decidability results for multi-weighted nondeterministic automata. However, for infinite words, the authors did not consider Muller acceptance condition for automata. Therefore, our future work will investigate decision problems for multi-weighted Muller automata. Also, weighted MSO logic for weighted timed automata was investigated in \cite{Qua11}. In our further work, we wish to combine the ideas of \cite{Qua11} and the current work to obtain a B\"uchi theorem for multi-weighted timed automata.

\bibliographystyle{abbrv}
\bibliography{literature}

\end{document}